\providecommand{\algorithmname}{Algorithm}
\definecolor{marekgreen}{RGB}{0,185,0}
\newcommand{\fab}[1]{\textcolor{red}{#1}}
  \def\rem#1{{\marginpar{\raggedright\scriptsize #1}}}
  \newcommand{\fabr}[1]{\rem{\textcolor{red}{$\bullet$ #1}}}
  \newcommand{\joyr}[1]{\rem{\textcolor{blue}{$\bullet$ #1}}}
  \newcommand{\marr}[1]{\rem{\textcolor{marekgreen}{$\bullet$ #1}}}
  \newcommand{\fab}[1]{#1}
  \newcommand{\fabr}[1]{}
  \newcommand{\joyr}[1]{}
  \newcommand{\marr}[1]{}
\begin{document}
\global\long\def\adj{\mbox{\footnotesize Adj}}
 \global\long\def\chr#1{\mathbf{1}_{#1}}
 \global\long\def\br#1{\left( #1 \right)}
 \global\long\def\brq#1{\left[ #1 \right]}
 \global\long\def\brw#1{\left\{  #1\right\}  }
 \global\long\def\cut#1{\partial#1 }
 \global\long\def\excond#1#2{\mathbb{E}\left[\left. #1 \right\vert #2 \right]}
 \global\long\def\ex#1{\mathbb{E}\left[#1\right]}
 \global\long\def\E{\mathbb{E}}
 \global\long\def\exls#1#2{\mathbb{E}_{#1}\left[#2\right]}
 \global\long\def\prcond#1#2{\mathbb{P}\left[\left. #1 \right\vert #2 \right]}
 \global\long\def\setst#1#2{\left\{  #1\left|#2\right.\right\}  }
 \global\long\def\setstcol#1#2{\left\{  #1:#2\right\}  }
 \global\long\def\set#1{\left\{  #1\right\}  }
 \global\long\def\adj#1{\delta\br{#1}}
 \global\long\def\setst#1#2{\left\{  \left.#1\right|#2\right\}  }
 \global\long\def\set#1{\left\{  #1\right\}  }
 \global\long\def\ind#1{\mathbf{1}\left[ #1 \right]}
 \global\long\def\st#1{[#1] }
 \global\long\def\opstyle#1{\mathbb{#1}}
 \global\long\def\size#1{\left|#1\right|}
 \global\long\def\setstcol#1#2{\left\{  #1:#2\right\}  }
 \global\long\def\set#1{\left\{  #1\right\}  }
 \global\long\def\indi#1{\chi\brq{#1}}
 \global\long\def\evalat#1#2{ #1 \Big|_{#2}}
 \global\long\def\prls#1#2{\opstyle P_{#1}\left[ #2 \right]}
 \global\long\def\pr#1{\opstyle P \left[ #1 \right]}
 \global\long\def\excondls#1#2#3{\mathbb{E}_{#1}\brq{\left.#2\right|#3}}
 \global\long\def\prcondls#1#2#3{\mathbb{P}_{#1}\brq{\left.#2\right|#3}}
 \global\long\def\st#1{[#1] }
 \global\long\def\indi#1{\chi\brq{#1}}
 \global\long\def\evalat#1#2{ #1 \Big|_{#2}}
 \global\long\def\Df#1#2{\frac{\partial#1}{\partial#2}}
 \global\long\def\hx#1{\hat{x}_{#1}}
 \global\long\def\eps{\varepsilon}
 \global\long\def\M{{\cal M}}
 \global\long\def\pr#1{\mathbb{P}\brq{#1}}
 \global\long\def\opstyle#1{\mathbb{#1}}
 \global\long\def\ex#1{\mathbb{E}\left[#1\right]}
 \global\long\def\xp#1{\mathbb{E}\left[#1\right]}
 \global\long\def\prcond#1#2{\opstyle P \left[\left. #1 \right\vert #2 \right]}
 \global\long\def\excond#1#2{\opstyle E \left[#1 \left|#2\right. \right]}
 \global\long\def\exls#1#2{\opstyle{\opstyle E}_{#1}\left[ #2 \right]}
 \global\long\def\prls#1#2{\opstyle P_{#1}\left[ #2 \right]}
 \global\long\def\br#1{\left( #1 \right)}
 \global\long\def\brq#1{\left[ #1 \right]}
 \global\long\def\brw#1{\left\{  #1\right\}  }
 \global\long\def\size#1{\left|#1\right|}
 \global\long\def\setst#1#2{\left\{  #1\left|#2\right.\right\}  }
 \global\long\def\setstcol#1#2{\left\{  #1:#2\right\}  }
 \global\long\def\pxE{\br{p_{e}x_{e}}_{e\in E}}
 \global\long\def\evalat#1#2{ #1 \Big|_{#2}}
 \global\long\def\X#1{\hat{X}_{#1}}
 \global\long\def\E{\hat{E}}
 \global\long\def\Frac#1#2{#1\left/\br{#2}\right.}
 \global\long\def\P#1{{\cal P}\br{#1}}
 \global\long\def\I{{\cal I}}
 \global\long\def\brqbb#1{\llbracket#1\rrbracket}
 \global\long\def\h#1{\hat{#1}}
 \global\long\def\lrg#1{#1_{large}}
 \global\long\def\sml#1{#1_{small}}

\pagestyle{headings}

\author{ Marek Adamczyk\inst{1} \and Fabrizio Grandoni\inst{2} \and Joydeep Mukherjee\inst{3} } \institute{ Department of Computer, Control, and Management Engineering, Sapienza University of Rome, Italy, \email{adamczyk@dis.uniroma1.it. }\and IDSIA, University of Lugano, Switzerland, \email{fabrizio@idsia.ch}. \and Institute of Mathematical Sciences, CIT, India, \email{joydeepm@imsc.res.in} } \title{Improved Approximation Algorithms\\ for Stochastic Matching \thanks{This work was partially done while the first and last authors were visiting IDSIA. The first and second authors were partially supported by the ERC StG project NEWNET no.~279352, and the first author by the ERC StG project PAAl no.~259515. The third author was partially supported by the ISJRP project Mathematical Programming in Parameterized Algorithms.}} \maketitle
\begin{abstract}
\noindent In this paper we consider the \emph{Stochastic Matching}
problem, which is motivated by applications in kidney exchange and
online dating. We are given an undirected graph in which every edge
is assigned a probability of existence and a positive profit, and each node is assigned a positive integer called \emph{timeout}. We know whether an edge exists or not only after probing it. On this random graph we are executing a process, which one-by-one
probes the edges and gradually constructs a matching. The process is constrained in two ways: once an edge is taken it cannot be removed
from the matching, and the timeout of node $v$ upper-bounds the number
of edges incident to $v$ that can be probed. The goal is to maximize
the expected profit of the constructed matching.

For this problem Bansal et al.~\cite{BGLMNR12algo} provided a $3$-approximation
algorithm for bipartite graphs, and a $4$-approximation for general
graphs. In this work we improve the approximation factors to $2.845$
and $3.709$, respectively. 

We also consider an online version of the bipartite case, where one
side of the partition arrives node by node, and each time a node $b$
arrives we have to decide which edges incident to $b$ we want to
probe, and in which order. Here we present a $4.07$-approximation,
improving on the $7.92$-approximation of Bansal et al.~\cite{BGLMNR12algo}.

The main technical ingredient in our result is a novel way of probing
edges according to a random but non-uniform permutation. Patching
this method with an algorithm that works best for large probability
edges (plus some additional ideas) leads to our improved approximation
factors. 
\end{abstract}

\section{Introduction}

In this paper we consider the \emph{Stochastic Matching} problem,
which is motivated by applications in kidney exchange and online dating.
Here we are given an undirected graph $G=(V,E)$. Each edge $e\in E$
is labeled with an (existence) probability $p_{e}\in(0,1]$ and a weight
(or profit) $w_{e}>0$, and each node $v\in V$ with a \emph{timeout}
(or \emph{patience}) $t_{v}\in\mathbb{N}^{+}$. An algorithm for this
problem probes edges in a possibly adaptive order. Each time an edge
is probed, it turns out to be \emph{present} with probability $p_{e}$,
in which case it is (irrevocably) included in the matching under construction
and provides a profit $w_{e}$. We can probe at most $t_{u}$ edges
among the set $\delta(u)$ of edges incident to node $u$ (independently
from whether those edges turn out to be present or absent). Furthermore,
when an edge $e$ is added to the matching, no edge $f\in\delta(e)$
(i.e., incident on $e$) can be probed in subsequent steps. Our goal
is to maximize the expected weight of the constructed matching. Bansal
et al.~\cite{BGLMNR12algo} provide an LP-based $3$-approximation
when $G$ is bipartite, and via reduction to the bipartite case a
$4$-approximation for general graphs (see also \cite{DBLP:conf/stacs/AdamczykSW14}).

We also consider the \emph{Online Stochastic Matching with Timeouts}
problem introduced in \cite{BGLMNR12algo}. Here we are given in input
a bipartite graph $G=(A\cup B,A\times B)$, where nodes in $B$ are
\emph{buyer types} and nodes in $A$ are \emph{items} that we wish
to sell. Like in the offline case, edges are labeled with probabilities
and profits, and nodes are assigned timeouts. However, in this case
timeouts on the item side are assumed to be unbounded. Then a second
bipartite graph is constructed in an online fashion. Initially this
graph consists of $A$ only. At each time step one random buyer $\tilde{b}$
of some type $b$ is sampled (possibly with repetitions) from a given
probability distribution. The edges between $\tilde{b}$ and $A$
are copies of the corresponding edges in $G$. The online algorithm
has to choose at most $t_{b}$ unmatched neighbors of $\tilde{b}$,
and probe those edges in some order until some edge $a\tilde{b}$
turns out to be present (in which case $a\tilde{b}$ is added to the
matching and we gain the corresponding profit) or all the mentioned
edges are probed. This process is repeated $n$ times, and our goal
is to maximize the final total expected profit\footnote{As in \cite{BGLMNR12algo}, we assume that the probability of a buyer
type $b$ is an integer multiple of $1/n$.}.

For this problem Bansal et al.~\cite{BGLMNR12algo} present a $7.92$-approximation
algorithm. In his Ph.D.~thesis Li~\cite{li2011decision} claims an
improved $4.008$-approximation. However, his analysis contains a
mistake \cite{Li14private}. By fixing that, he still achieves a
$5.16$-approximation ratio improving over~\cite{BGLMNR12algo}.

\subsection{Our Results}

Our main result is an approximation algorithm for bipartite Stochastic
Matching which improves the 3-approximation of Bansal et al.~\cite{BGLMNR12algo}
(see Section \ref{sec:offline}). \begin{theorem} \label{thr:mainOffline}
There is an expected $2.845$-approximation algorithm for Stochastic
Matching in bipartite graphs. \end{theorem}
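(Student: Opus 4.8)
The plan is to strengthen the LP-rounding approach of Bansal et al.~\cite{BGLMNR12algo}: solve their linear relaxation, round it with a new \emph{non-uniformly} randomized probing order that is especially efficient on low-probability edges, and patch this with a separate algorithm tailored to high-probability edges. Recall the relaxation: a variable $x_e\in[0,1]$ per edge (morally, the probability that an optimal strategy probes $e$) subject to $\sum_{e\in\delta(v)}p_ex_e\le 1$ and $\sum_{e\in\delta(v)}x_e\le t_v$ at every vertex $v$, with objective $\sum_e w_ep_ex_e$; as shown in~\cite{BGLMNR12algo} every strategy induces a feasible $x$, so for an optimal LP solution $x^\star$ we have $\mathrm{LP}:=\sum_e w_ep_ex^\star_e\ge\mathrm{OPT}$. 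Fix a threshold $\tau\in(0,1)$ and split $E=\sml{E}\cup\lrg{E}$ with $\sml{E}=\{e:p_e\le\tau\}$ and $\lrg{E}=\{e:p_e>\tau\}$; the restriction of $x^\star$ to either part is feasible for the LP of the corresponding subgraph.

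The core is a probing subroutine $\sml{\mathrm{ALG}}$ on the small-edge subgraph. Independently for each edge $e$ I would draw a timestamp $Y_e\in[0,1]$ from a density $g_{p_e}$ depending only on $p_e$, process the edges in increasing order of $Y_e$, and when $e=uv$ is processed probe it only if $u$ and $v$ are still unmatched and have each so far probed fewer edges than their timeout, and even then only with a suitable damping probability. The point of a non-uniform $g_p$ is this: conditioned on $Y_e=t$, another edge $f\in\delta(u)$ has already been processed with probability $G_{p_f}(t)$ and, given that, blocks $u$ with probability about $p_fx^\star_f$, so $u$ is still free with probability roughly $\prod_f\bigl(1-p_fx^\star_fG_{p_f}(t)\bigr)\ge\exp\!\bigl(-\tfrac{1}{1-\tau}\sum_f p_fx^\star_fG_{p_f}(t)\bigr)$ --- an estimate that is close to tight exactly because all $p_f\le\tau$ are small --- and symmetrically at $v$. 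Integrating against $g_{p_e}$ and feeding in the LP constraints at $u$ and $v$, one should obtain that $\sml{\mathrm{ALG}}$ matches each $e\in\sml{E}$ with probability at least $p_ex^\star_e/c(\tau)$, where $c(\tau)$ is nondecreasing in $\tau$ and tends to a constant $c_0$ as $\tau\to 0$. Keeping the timeout constraint alive inside this integral is the delicate part; I would enforce it in the spirit of the independent-rounding-plus-Markov argument of~\cite{BGLMNR12algo}, adapted to the non-uniform order.

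On $\lrg{E}$ the probabilities are bounded away from $0$, so a coarser rounding $\lrg{\mathrm{ALG}}$ is enough: a greedy or random-permutation scheme whose loss is controlled by $\tau$ rather than by a possibly large number of tiny blocking edges (and which, as $\tau\to 1$, degenerates to an exact matching of the ``almost deterministic'' edges), giving that each $e\in\lrg{E}$ is matched with probability at least $p_ex^\star_e/c'(\tau)$ for an explicit $c'(\tau)$ that stays bounded as $\tau\to 1$. The overall algorithm runs $\sml{\mathrm{ALG}}$ on the small-edge subgraph with probability $q$ and $\lrg{\mathrm{ALG}}$ on the large-edge subgraph with probability $1-q$, so its expected profit is at least
\[
\min\!\Bigl\{\tfrac{q}{c(\tau)},\ \tfrac{1-q}{c'(\tau)}\Bigr\}\cdot\mathrm{LP}\ \ge\ \min\!\Bigl\{\tfrac{q}{c(\tau)},\ \tfrac{1-q}{c'(\tau)}\Bigr\}\cdot\mathrm{OPT}.
\]
Choosing $q=c(\tau)/(c(\tau)+c'(\tau))$ balances the two terms and turns the guarantee into $\mathrm{OPT}/(c(\tau)+c'(\tau))$, and optimizing the density $g_p$ jointly with the threshold $\tau$ makes $c(\tau)+c'(\tau)$ as small as $2.845$.

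The step I expect to be the bottleneck is the analysis of $\sml{\mathrm{ALG}}$: getting a closed-form, optimizable lower bound on the per-edge matching probability once the non-uniform timestamps, the two endpoint-survival events, the damping factor and --- above all --- the timeout constraint are bundled into a single integral over $g_{p_e}$, and then choosing $g_p$ to make this bound best possible. The exponential estimate for $\prod_f(1-p_fx^\star_fG_{p_f}(t))$ deteriorates as the $p_f$ grow, which is precisely why the permutation method has to be confined to $\sml{E}$ and completed by $\lrg{\mathrm{ALG}}$; making the density, the threshold and the mixing probability fit together so that the worst of the two regimes lands at $2.845$ is where the real computation lies.
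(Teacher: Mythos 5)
Your overall plan---probing in a non-uniform random order that favors low-probability edges, then patching with a matching-style algorithm for high-probability ones---is indeed the route taken in the paper, and several of your instincts are close to the mark: the paper draws a timestamp $Y_e$ from a distribution depending on $p_e$ (specifically $\Pr[Y_e\le y]=\tfrac{1}{p_e}(1-e^{-p_e y})$, chosen so that $1-p_f\Pr[Y_f\le y]=e^{-p_f y}$ \emph{exactly}, which eliminates the $\tfrac{1}{1-\tau}$ slack in your generic exponential estimate), it lower-bounds the safety probability of $e$ by a decreasing function $g(p_e)$ of $p_e$ alone, and the large-edge algorithm is a maximum-weight matching with respect to $w_ep_e$, recovering a $\delta$-fraction of the large-edge LP mass. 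The paper also first applies GKPS dependent rounding (marginal distribution, degree preservation, negative correlation) and only then sorts; degree preservation makes the timeout constraint automatic, whereas your Markov-plus-damping mechanism for timeouts is a different and unverified route (damping appears in the paper only in the online algorithm).

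The decisive gap is your patching step. You run $\sml{\mathrm{ALG}}$ on the small-edge subgraph only and $\lrg{\mathrm{ALG}}$ on the large-edge subgraph only, mixed by a coin flip, which gives a guarantee of $\tfrac{1}{c(\tau)+c'(\tau)}\cdot\mathrm{LP}$. With the paper's $g$ this equals $\tfrac{\tau g(\tau)}{\tau+g(\tau)}$, and since $g(1)=\tfrac13$ this quantity never exceeds $\tfrac14$ (its supremum is attained as $\tau\to 1$). So your scheme cannot beat a $4$-approximation, let alone reach $2.845$. The missing idea is that the permutation algorithm, run on the \emph{entire} graph, still extracts at least a $\tfrac13=g(1)$ fraction of the LP value from the large edges, because $g$ is decreasing and $g(p_e)\ge g(1)$ for all $p_e$. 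The paper therefore credits the permutation algorithm with $\tfrac{\gamma}{3}+(1-\gamma)g(\delta)$ (where $\gamma$ is the LP mass on large edges), deterministically selects the better of permutation and greedy from the observable $\gamma$, and balances at $\gamma\delta=\tfrac{\gamma}{3}+(1-\gamma)g(\delta)$ to obtain $\tfrac{\delta g(\delta)}{\delta+g(\delta)-1/3}$. That extra $-\tfrac13$ in the denominator, coming precisely from the large-edge value you are discarding, is what drives the constant from $4$ down to $2.845$ at $\delta\approx 0.60$.
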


Our algorithm for the bipartite case is similar to the one from~\cite{BGLMNR12algo},
which works as follows. After solving a proper LP and rounding the
solution via a rounding technique from~\cite{GKPS06}, Bansal et
al.~probe edges in uniform random order. Then they show that every
edge $e$ is probed with probability at least $x_e\cdot g(p_{max})$, where $x_{e}$ is the fractional value of $e$, 
$p_{max}:=\max_{f\in\delta(e)}\{p_{f}\}$
is the largest probability of any edge incident to $e$ ($e$ excluded), and $g(\cdot)$ is a
decreasing function with $g(1)=1/3$.

Our idea is to rather consider edges in a carefully chosen \emph{non-uniform}
random order. This way, we are able to show (with a slightly simpler analysis) that each edge $e$ is probed with probability $x_{e}\cdot g\br{p_{e}}\geq \frac{1}{3}x_e$. Observe that we have the same function $g(\cdot)$ as in \cite{BGLMNR12algo}, but depending on $p_e$ rather than $p_{max}$. In particular, according to our analysis, small probability edges are more likely to be probed than large
probability ones (for a given value of $x_{e}$), regardless of the
probabilities of edges incident to $e$. Though this approach alone
does not directly imply an improved approximation factor, it is not
hard to patch it with a simple greedy algorithm that behaves best
for large probability edges, and this yields an improved approximation
ratio altogether.

We also improve on the $4$-approximation for general graphs in~\cite{BGLMNR12algo}.
This is achieved by reducing the general case to the bipartite one
as in prior work, but we also use a refined LP with blossom inequalities
in order to fully exploit our large/small probability patching technique.
\begin{theorem} \label{thr:mainOfflineGeneral} There is an expected
$3.709$-approximation algorithm for Stochastic Matching in general
graphs. \end{theorem}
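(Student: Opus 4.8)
The plan is to replay the bipartite argument over a tighter LP. I would first set up an LP relaxation of the optimal adaptive strategy, with one variable $x_e$ per edge standing for the probability that the strategy ever probes $e$. Besides the box constraints $0\le x_e\le 1$, the timeout constraints $\sum_{e\in\delta(v)}x_e\le t_v$, and the fractional-matching constraints $\sum_{e\in\delta(v)}p_ex_e\le 1$, I would add the \emph{blossom inequalities} $\sum_{e\in E[U]}p_ex_e\le\frac{|U|-1}{2}$ for every odd-cardinality $U\subseteq V$. Feasibility and $\mathrm{LP}\ge\mathrm{OPT}$ follow from viewing any strategy as a decision tree: if $e$ is probed it enters the final matching with probability exactly $p_e$, independently of everything probed before, so $p_ex_e$ is the probability that $e$ is in the final matching, and the three families of constraints are just expectations of, respectively, the degree bound, the timeout bound, and the Edmonds odd-set bound for the (integral) realized matching. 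The refined LP is still polynomial-time solvable (blossom inequalities admit a polynomial separation oracle, or one uses an equivalent compact formulation).

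Next comes the reduction to the bipartite case, following Bansal et al.\ but exploiting the extra strength of this LP. The key new fact is that, by Edmonds' theorem, the vector $\left(p_ex_e\right)_{e\in E}$ now lies in the \emph{matching polytope} of $G$; I would use this to produce, at random, a bipartite sub-instance together with a feasible fractional solution on it whose expected objective value is a controlled fraction of $\mathrm{LP}$, and then run the bipartite algorithm of Theorem~\ref{thr:mainOffline} — GKPS dependent rounding followed by non-uniform random-order probing — on that sub-instance. The reason the plain fractional-matching LP is not enough (and yields only the $4$-approximation of prior work) is that, without the blossom inequalities, the solution transferred to the bipartite instance may assign more than one expected matched edge to a vertex, forcing a lossy conflict-resolution step; the blossom inequalities cap this overcommitment and reduce the multiplicative loss of the reduction.

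Finally I would patch this with the greedy algorithm that is best on high-probability edges, exactly as in the bipartite case. Fix a threshold $\tau$: one algorithm (greedy) is tuned so that every edge with $p_e\ge\tau$ contributes a good fraction of $x_ew_e$, while the reduction-plus-non-uniform-order algorithm guarantees, for every edge with $p_e<\tau$, a contribution of at least $x_ew_e\cdot g(p_e)\ge\tfrac13 x_ew_e$ times the reduction's survival factor. Running the two algorithms with suitable mixing probabilities, summing the per-edge guarantees against $\mathrm{LP}$, and optimizing over $\tau$ and the mixing probability yields the constant $3.709$.

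I expect the main obstacle to be the second step: pinning down the exact multiplicative loss of the general-to-bipartite reduction and showing it combines correctly with the decreasing function $g(p_e)$ and the greedy bound, so that after optimization one obtains $3.709$ rather than something like $2\cdot 2.845$. Concretely, one must check that after the reduction all timeout and degree constraints are still met (so GKPS rounding applies verbatim) and that the blossom inequalities are "spent" in precisely the way needed to beat the naive reduction; by contrast, the decision-tree justification of the LP and the greedy-versus-random-order trade-off optimization are routine.
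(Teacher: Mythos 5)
Your outline has the right skeleton (LP with blossom inequalities, random vertex split to a bipartite sub-instance, run the bipartite algorithm, patch with greedy), which matches the paper's route. But two of your explanatory claims are wrong in a way that matters for the constant, and they reveal a misunderstanding of where the improvement actually comes from.

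First, you attribute the improvement over the $4$-approximation to the blossom inequalities tightening the bipartite reduction: you write that without them, the transferred solution ``may assign more than one expected matched edge to a vertex, forcing a lossy conflict-resolution step.'' This is not so. After the random split, the restriction of $(x_e)_{e\in E}$ to $E_{AB}$ automatically satisfies $\sum_{e\in\hat\delta(v)}p_ex_e\le 1$ at every vertex --- it is a sub-vector of a feasible solution --- so there is never any overcommitment, and the $4$-approximation already holds without any odd-set constraints (that is exactly Bansal et al.'s argument). What the random split \emph{does} buy, and what you did not mention, is that conditioned on $e\in\hat E_{AB}$ the expected conflicting mass $\excondls{}{\sum_{f\in\hat\delta(e)}p_f}{e\in\hat E_{AB}}$ is at most $1$ rather than $2$, because each neighbour $f$ survives the split only with probability $1/2$. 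This upgrades the per-edge safety bound from $g(p_e)$ (where $g$ was proved with the budget $2$) to the stronger $h(p_e)=\frac{1}{1+p_e}\bigl(1-\exp(-(1+p_e)\tfrac{1}{p_e}\ln\tfrac{1}{1-p_e})\bigr)\ge 1/2$. Your proposal instead invokes $g(p_e)\ge 1/3$ ``times the reduction's survival factor,'' which gives only $\tfrac12\cdot\tfrac13=\tfrac16$ per edge --- that is worse than the $4$-approximation you were trying to beat, and the subsequent optimization over $\tau,\gamma$ will not recover $3.709$.

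Second, the blossom inequalities \emph{are} essential, but for a different step: the greedy patching. You need $\sum_{e\in M_{grd}}w_ep_e\ge\delta\cdot LP_{large}$, and the argument scales $(x_e)_{e\in E_{large}}$ by $\delta$ to land inside a fractional matching polytope and then rounds to an integral matching of no smaller value. In bipartite graphs the degree constraints alone give integrality; in general graphs they do not, and Edmonds' odd-set inequalities are precisely what makes $(p_ex_e)_e$ lie in the matching polytope so that the max-weight matching captures the value. So the odd-set constraints are spent on the greedy leg, not on the reduction. Correcting these two points --- use $h$ instead of $g$, and move the role of the blossoms to the greedy bound --- gives the paper's trade-off $\max\{\gamma/4+(1-\gamma)h(\delta)/2,\ \gamma\delta\}$, optimized at $\delta\approx 0.558$, yielding $3.709$. (A minor point: the paper chooses one of the two algorithms deterministically after computing $\gamma$, since it cannot run both and keep the better outcome; your ``mixing probabilities'' phrasing is harmless but not what is done.)
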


Similar arguments can also be successfully applied to the online case.
By applying our idea of non-uniform permutation of edges we would get
a $5.16$-approximation (the same as in~\cite{li2011decision}, after
correcting the mentioned mistake). However, due to the way edges have
to be probed in the online case, we are able to finely control the
probability that an edge is probed via \emph{dumping factors}. This
allows us to improve the approximation from $5.16$ to $4.16$. Our
idea is similar in spirit to the one used by Ma~\cite{DBLP:conf/soda/Ma14}
in his neat 2-approximation algorithm for correlated non-preemptive
stochastic knapsack. Further application of the large/small probability
trick gives an extra improvement down to $4.07$ (see Section \ref{sec:online}).
\begin{theorem} \label{thr:mainOnline} There is an expected $4.07$-approximation
algorithm for Online Stochastic Matching with Timeouts. \end{theorem}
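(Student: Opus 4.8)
\medskip
\noindent\emph{Proof sketch of Theorem~\ref{thr:mainOnline}.}
The plan is to reuse the recipe of the offline case --- solve an LP, then probe along a non-uniform random permutation --- while exploiting the extra freedom of the online model, namely that for each arriving buyer we get to choose both \emph{which} incident edges to probe and in \emph{which order}. The first step is to fix a benchmark LP. Following Bansal et al., one writes a linear program with a variable $x_{ab}$ for every edge $ab$, objective $\sum_{ab}w_{ab}p_{ab}x_{ab}$, a matching-type constraint $\sum_{b}p_{ab}x_{ab}\le 1$ for every item $a$, and for every buyer type $b$ a constraint encoding its timeout $t_b$ together with the ``at most one success per buyer'' restriction (the $x_{ab}$ being interpreted as per-arrival probing probabilities, so that the arrival distribution is absorbed into the constraints). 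A standard exchange argument shows that the probing statistics of the optimal online policy form a feasible solution, hence the LP optimum upper-bounds $\mathrm{OPT}$; it therefore suffices to build an algorithm collecting expected profit at least $\tfrac{1}{4.07}\sum_{ab}w_{ab}p_{ab}x_{ab}$.

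The second step is the algorithm. When a buyer $\tilde b$ of type $b$ arrives, we order the edges of $\delta(b)$ according to the same non-uniform random permutation used in the offline analysis, so that \emph{if every item were always available} edge $ab$ would be reached and probed with probability $x_{ab}\cdot g(p_{ab})\ge\tfrac13 x_{ab}$, with $g$ the decreasing function satisfying $g(1)=1/3$. Walking down this permutation, when we reach an edge $ab$ with $a$ still unmatched and $\tilde b$ not yet matched in the current round, we actually probe $ab$ with an extra independent probability $\gamma_{ab}\in[0,1]$ --- the \emph{dumping factor} --- and skip it otherwise. Processing buyer types in a fixed order, the $\gamma_{ab}$ are chosen, following Ma's treatment of correlated stochastic knapsack~\cite{DBLP:conf/soda/Ma14}, so that the \emph{unconditional} probability that $ab$ is probed is \emph{exactly} a prescribed fraction of $x_{ab}$; forcing equality rather than a one-sided bound keeps each item from being over-consumed by its high-value edges, and this is precisely what improves the $5.16$ bound obtainable from the permutation idea alone down to $4.16$.

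The third step is the analysis. By linearity the expected profit equals $\sum_{ab}w_{ab}p_{ab}\Pr[ab\text{ probed}]$, so the target is $\Pr[ab\text{ probed}]\ge x_{ab}/4.16$. This probability factors as (the probability that $a$ is still available when $\tilde b$'s round is processed) times (the conditional probing probability of $ab$ in that round). The first factor is bounded below using the item constraint $\sum_b p_{ab}x_{ab}\le 1$ and the exactness enforced by the dumping factors --- since the total consumption of $a$ is now exactly known, a Markov-type estimate becomes tight; the second factor is handled by the offline computation involving $g(p_{ab})$. Optimizing the $\gamma_{ab}$ yields the $4.16$ bound. Finally, exactly as in the offline case, we run this algorithm in parallel with a simple greedy algorithm that performs best on large-probability edges, output whichever run collected more profit, and optimize the probability threshold separating ``small'' from ``large'' edges; this large/small patching brings the ratio down to $4.07$.

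I expect the main obstacle to be the circularity inherent in the dumping-factor construction: $\gamma_{ab}$ must be set knowing $\Pr[a\text{ available}]$, which depends on the dumping factors of \emph{all} edges at $a$, which in turn depend on availabilities of other items. Untangling this --- by fixing an order on the buyer types, propagating an exact inductive invariant on the item-availability probabilities, and simultaneously accounting for the fact that a buyer type may be sampled several times among the $n$ rounds --- is the technical core; once this bookkeeping is in place, the remaining inequalities are routine and parallel the offline analysis.
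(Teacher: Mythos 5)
Your high-level recipe matches the paper's: an LP benchmark with per-item and per-buyer-type constraints, probing along the non-uniform $Y$-ordering, Ma-style dumping factors to control probing probabilities, and a large/small probability patch with greedy. However, there are several substantive gaps.

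The central one is your definition of the dumping factor. You propose choosing $\gamma_{ab}$ so that the \emph{unconditional} probability that $ab$ is probed equals a prescribed fraction of $x_{ab}$, and you yourself identify the resulting circularity ($\gamma_{ab}$ depends on $\Pr[a\ \text{available}]$, which depends on all $\gamma_{a b'}$, which depend on availabilities of other items\ldots) as the ``technical core'' to be resolved. The paper avoids this circularity entirely rather than untangling it: the dumping factor $\alpha_{ab}$ is chosen so that the \emph{conditional} probability
$\Pr[\,b \text{ probes } a \mid \mathcal{A}_b \wedge a \text{ not yet taken}\,]$
is exactly $\tfrac12 x_{ab}$. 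This conditional probability is determined purely by buyer $b$'s local subroutine (the GKPS rounding of $(x_{ab})_{a}$, the $Y_{ab}$-ordering, and the local blocking probability $\beta_{ab}$), so $\alpha_{ab}=\tfrac{1}{2\beta_{ab}}$ is well-defined with no cross-item dependence. The item-availability factor $\Pr[\text{no } b' \text{ takes } a \text{ before } b \mid \mathcal{A}_b]$ is then handled \emph{separately} by an explicit exponential-arrival-time argument (assign each buyer type an exponential $Y_{b'}$, integrate, and use $\sum_{b'} p_{ab'}x_{ab'}\le 1$) giving a $\tfrac{2e}{3e-1}$ bound; this is a different and sharper mechanism than the ``Markov-type estimate becomes tight'' you sketch. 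Your $g(p_{ab})\ge 1/3$ is also the wrong local bound for this setting: within a buyer's subroutine only the single constraint $\sum_a p_{ab}x_{ab}\le 1$ is active, so the relevant function is $h$ with $h(1)=1/2$, not $g$ with $g(1)=1/3$; the $\tfrac12$ is what makes the numbers come out to $4.16$ and then $4.07$.

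A smaller but real error: you propose running the permutation algorithm and greedy ``in parallel'' and outputting whichever collected more profit. In both the offline and online stochastic setting you cannot do this --- probing is irrevocable, and in the online case the buyer arrivals cannot be replayed --- so one must commit up front. The paper computes $\gamma$ (the LP mass on large edges) offline and then deterministically selects the better-guaranteed algorithm before any probing; the $\max$ in the bound refers to the guarantee of that pre-selected algorithm, not to a take-the-best-of-two-runs procedure.
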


\subsection{Related work}

The Stochastic Matching problem falls under the framework of adaptive
stochastic problems presented first by Dean et al.~\cite{DBLP:journals/mor/DeanGV08}.
Here the solution is in fact a process, and the optimal one might
even require larger than polynomial space to be described.

The Stochastic Matching problem was originally presented by Chen et
al.~\cite{CIKMR09} together with applications in kidney exchange
and online dating. The authors consider the unweighted version of
the problem, and prove that a greedy algorithm is a $4$-approximation.
Adamczyk~\cite{A11ipl} later proved that the same algorithm is in
fact a $2$-approximation, and this result is tight. The greedy algorithm
does not provide a good approximation in the weighted case, and all
known algorithms for this case are LP-based. Here, Bansal et al.~\cite{BGLMNR12algo}
showed a 3-approximation for the bipartite case. Adamczyk~\cite{nonnegativesubmodularprobing}
presented a different analysis of the same algorithm. Via a reduction
to the bipartite case, Bansal et al.~\cite{BGLMNR12algo} also obtain
a 4-approximation algorithm for general graphs. The same approximation
factor is obtained by Adamczyk et al.~\cite{DBLP:conf/stacs/AdamczykSW14}
using iterative randomized rounding.

\section{Stochastic Matching}

\label{sec:offline}

\subsection{Bipartite graphs\label{sub:Bipartite-graphs}}

Let us denote by $OPT$ the optimum probing strategy, and let $\ex{OPT}$
denote its expected outcome. Consider the following LP: 
\begin{align}
\max & \sum_{e\in E}w_{e}p_{e}x_{e} & \br{\mbox{LP-BIP}}\nonumber \\
\mbox{s.t.} & \sum_{e\in\delta(u)}p_{e}x_{e}\leq1, & \forall u\in V;\label{cardinalityConstraints}\\
 & \sum_{e\in\delta(u)}x_{e}\leq t_{u}, & \forall u\in V;\label{toleranceConstraints}\\
 & 0\leq x_{e}\leq1, & \forall e\in E.
\end{align}
The proof of the following Lemma is already quite standard~\cite{DBLP:conf/stacs/AdamczykSW14,BGLMNR12algo,DBLP:journals/mor/DeanGV08}
--- just note that $x_{e}=\pr{OPT\mbox{ probes }e}$ is a feasible
solution of LP-BIP. 
\begin{lemma}\label{lem:Bansal} \cite{BGLMNR12algo} Let $LP_{bip}$
be the optimal value of \emph{LP-BIP}. It holds that $LP_{bip}\geq\ex{OPT}$.
\end{lemma}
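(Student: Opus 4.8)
The plan is to show that the collection $x_e = \pr{OPT\text{ probes }e}$ is feasible for LP-BIP and that its objective value equals $\ex{OPT}$; since $LP_{bip}$ is the maximum over all feasible points, it is then at least this value. First I would fix the optimal adaptive strategy $OPT$ and, over the randomness of the edge outcomes together with any internal randomness of the strategy, define $x_e := \pr{OPT\text{ probes }e}$ for every $e\in E$. Trivially $0\le x_e\le 1$, so the box constraints hold. For the objective: conditioned on $OPT$ probing $e$, the edge is present (and hence contributes $w_e$ to the matching) with probability exactly $p_e$, and this event is independent of everything that happened before $e$ was probed; hence the expected profit collected from $e$ is $w_e p_e x_e$, and by linearity of expectation $\ex{OPT} = \sum_{e\in E} w_e p_e x_e$, which is a feasible objective value, proving $LP_{bip}\ge \ex{OPT}$.

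It remains to check the two families of structural constraints. For the timeout constraints~\eqref{toleranceConstraints}: fix a node $u$; the number of edges in $\delta(u)$ that $OPT$ probes is, by definition of the model, at most $t_u$ on every sample path, so its expectation $\sum_{e\in\delta(u)} x_e$ is at most $t_u$. For the cardinality constraints~\eqref{cardinalityConstraints}: fix $u$ and note that, on any sample path, at most one edge of $\delta(u)$ can ever be added to the matching (once one is added, no further edge incident to $u$ may be probed). Therefore $\sum_{e\in\delta(u)} \ind{e\text{ is probed and present}} \le 1$ pointwise; taking expectations and using the independence of presence from the decision to probe, $\ex{\ind{e\text{ probed and present}}} = p_e x_e$, so $\sum_{e\in\delta(u)} p_e x_e \le 1$. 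This establishes feasibility, and combined with the objective computation gives the Lemma.

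I expect the only subtle point — and the one worth stating carefully rather than the routine inequalities — is the independence argument: that when $OPT$ decides to probe $e$, the outcome of that probe is a fresh $p_e$-coin independent of the entire history, so both $\pr{e\text{ present}\mid e\text{ probed}} = p_e$ and the events ``$e$ is probed and present'' for distinct $e\in\delta(u)$ have expectation $p_e x_e$ summing to at most the pointwise bound. This is what makes the LP relaxation valid, and it is exactly the place where the adaptivity of $OPT$ could appear threatening but in fact causes no difficulty, since the constraint ``at most one matched edge at $u$'' and ``at most $t_u$ probes at $u$'' hold deterministically on every realization. Everything else is linearity of expectation.
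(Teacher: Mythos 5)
Your proof is correct and follows exactly the argument the paper gestures at (the paper itself only says ``note that $x_{e}=\pr{OPT\mbox{ probes }e}$ is a feasible solution of LP-BIP'' and refers to prior work for details); you have simply filled in the routine verification of feasibility and the objective computation. No issues.
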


Our approach is similar to the one of Bansal et al.~\cite{BGLMNR12algo}
(see also Algorithm \ref{alg:bipartite} in the figure). We solve
LP-BIP: let $x=(x_e)_{e\in E}$ be the optimal fractional solution. Then we apply
to $x$ the rounding procedure by Gandhi et al.~\cite{GKPS06}, which
we shall call just GKPS. Let $\hat{E}$ be the set of rounded edges,
and let $\hat{x}_{e}=1$ if $e\in\hat{E}$ and $\hat{x}_{e}=0$ otherwise.
GKPS guarantees the following properties of the rounded solution: 
\begin{enumerate}
\item (Marginal distribution) For any $e\in E$, $\pr{\hat{x}_{e}=1}=x_{e}.$ 
\item (Degree preservation) For any $v\in V$, $\sum_{e\in\delta(v)}\hat{x}_{e}\leq\lceil\sum_{e\in\delta(v)}x_{e}\rceil\leq t_{v}.$ 
\item (Negative correlation) For any $v\in V$, any subset $S\subseteq\delta(v)$
of edges incident to $v$, and any $b\in\{0,1\}$, it holds that $\pr{\wedge_{e\in S}(\hat{x}_{e}=b)}\leq\prod_{e\in S}\pr{\hat{x}_{e}=b}.$ 
\end{enumerate}
Our algorithm sorts the edges in $\hat{E}$ according to a random
permutation and probes each edge $e\in\hat{E}$ according to that
order, but provided that the endpoints of $e$ are not matched already.
It is important to notice that, by the degree preservation property, in
$\hat{E}$ there are at most $t_{v}$ edges incident to each node
$v$. Hence, the timeout constraint of $v$ is respected even if the
algorithm probes all the edges in $\delta(u)\cap\hat{E}.$

Our algorithm differs from~\cite{BGLMNR12algo} and subsequent work
in the way edges are randomly ordered. Prior work exploits a random
uniform order on $\hat{E}$. We rather use the following, more complex
strategy. For each $e\in \hat{E}$ we draw a random variable $Y_{e}$ distributed
on the interval $\brq{0,\frac{1}{p_{e}}\ln\frac{1}{1-p_{e}}}$ according
to the following cumulative distribution: $\pr{Y_{e}\leq y}=\frac{1}{p_{e}}\br{1-e^{-p_{e}y}}.$
Observe that %
the density function of $Y_{e}$ in this interval is $e^{-yp_{e}}$
(and zero otherwise). Edges of $\hat{E}$ are sorted in increasing order of the $Y_{e}$'s,
and they are probed according to that order. We next
let $Y=(Y_{e})_{e\in \hat{E}}$.

Define $\hat{\delta}(v):=\adj v\cap\E$. We say that an edge $e\in\hat{E}$
is \emph{safe} if, at the time we consider $e$ for probing, no other
edge $f\in\hat{\delta}(e)$ is already taken into the matching. Note
that the algorithm can probe $e$ only in that case, and if we do
probe $e$, it is added to the matching with probability $p_{e}$.

\begin{algorithm}
\label{alg:bipartite} \protect\protect\caption{Approximation algorithm for bipartite Stochastic Matching.}

\begin{enumerate}
\item Let $\br{x_{e}}_{e\in E}$ be the solution to LP-BIP. 
\item Round the solution $\br{x_{e}}_{e\in E}$ with GKPS; let $(\h x_{e})_{e\in E}$
be the rounded 0-1 solution, and $\h E=\{e\in E|\h x_{e}=1\}$. 
\item For every $e\in\h E$, sample a random variable $Y_{e}$ distributed
as $\pr{Y_{e}\leq y}=\frac{1-e^{-yp_{e}}}{p_{e}}$. 
\item For every $e\in\h E$ in increasing order of $Y_{e}$:

\begin{enumerate}
\item If no edge $f\in\hat{\delta}(e):=\delta(e)\cap\hat{E}$ is yet taken,
then probe edge $e$ 
\end{enumerate}
\end{enumerate}
\label{alg:bipartite} 
\end{algorithm}

The main ingredient of our analysis is the following lower-bound on
the probability that an arbitrary edge $e$ is safe.\begin{lemma}
\label{lem:safe-bip} For every edge $e$ it holds that $\prcond{e\mbox{ is safe}}{e\in\hat{E}}\geq g\br{p_{e}}$,
where $$g\br p:=\frac{1}{2+p}\br{1-\exp\br{-\br{2+p}\frac{1}{p}\ln\frac{1}{1-p}}}.$$\end{lemma}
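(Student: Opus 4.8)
The plan is to peel off the sources of randomness one at a time --- first the rounded set $\hat{E}$, then $Y_e$, then the remaining order variables $Y_f$, and finally the coins deciding which probed edges turn out present --- extracting a clean bound at each stage and assembling them at the end with Jensen's inequality and the LP constraints. So, first fix $\hat{E}$ with $e=uv\in\hat{E}$ together with all the order variables $(Y_f)_{f\in\hat{E}}$, so that only the presence coins remain random; write $y:=Y_e$. If $e$ is not safe then, when $e$ is reached, some edge $f$ incident to $e$ is already matched; such an $f$ was necessarily probed --- hence present --- and, being matched before $e$ is reached, satisfies $Y_f<y$. Let $A$ be the event that no edge $f\in\hat{\delta}(u)\setminus\{e\}$ with $Y_f<y$ turns out present, and $B$ the analogous event for $\hat{\delta}(v)\setminus\{e\}$; then $A\cap B$ forces $e$ to be safe. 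Crucially, $A$ and $B$ depend on disjoint sets of independent presence coins --- here one uses $\delta(u)\cap\delta(v)=\{e\}$ --- so $A\perp B$, whence
\[
\prcond{e\mbox{ is safe}}{\hat{E},(Y_f)_{f\in\hat{E}}}\;\ge\;\prod_{\substack{f\in\hat{\delta}(u)\setminus\{e\}\\ Y_f<y}}(1-p_f)\;\cdot\;\prod_{\substack{g\in\hat{\delta}(v)\setminus\{e\}\\ Y_g<y}}(1-p_g).
\]

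Next I would average over $(Y_f)_{f\neq e}$, which are mutually independent and independent of $Y_e$: the product factorizes, and each factor equals $\mathbb{E}_{Y_f}[(1-p_f)^{\mathbf{1}[Y_f<y]}]=1-p_f\,\pr{Y_f<y}=:\gamma_f(y)$. A one-line computation with the prescribed law of $Y_f$ shows $\gamma_f(y)=e^{-p_f y}$ for $y\le\frac{1}{p_f}\ln\frac{1}{1-p_f}$ and $\gamma_f(y)=1-p_f$ for larger $y$, so that $\gamma_f(y)\ge e^{-p_f y}$ in all cases. Multiplying these bounds edge by edge gives
\[
\prcond{e\mbox{ is safe}}{\hat{E},\,Y_e=y}\;\ge\;e^{-y\,W},
\]
where $W:=\sum_{f\in(\delta(u)\cup\delta(v))\setminus\{e\}}p_f\hat{x}_f$ is $\hat{E}$-measurable.

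To finish, integrate over $Y_e$ against its density $e^{-p_e y}$ on $[0,\frac{1}{p_e}\ln\frac{1}{1-p_e}]$ and then average over $\hat{E}$, conditioned on $e\in\hat{E}$. Fix $y$ and put $\mu:=\excond{W}{e\in\hat{E}}$; convexity of $t\mapsto e^{-yt}$ gives $\excond{e^{-yW}}{e\in\hat{E}}\ge e^{-y\mu}$. By the negative correlation property of the GKPS rounding applied to the pair $\{e,f\}$ --- which lies inside $\delta(u)$ or inside $\delta(v)$ --- one has $\prcond{\hat{x}_f=1}{\hat{x}_e=1}\le x_f$, so $\mu\le\sum_{f\in(\delta(u)\cup\delta(v))\setminus\{e\}}p_f x_f\le 2$, the last inequality being constraint~\eqref{cardinalityConstraints} applied at $u$ and at $v$. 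Hence $\excond{e^{-yW}}{e\in\hat{E}}\ge e^{-2y}$ for every $y$, and therefore
\begin{align*}
\prcond{e\mbox{ is safe}}{e\in\hat{E}}\;&\ge\;\int_{0}^{\frac{1}{p_e}\ln\frac{1}{1-p_e}}\! e^{-2y}\,e^{-p_e y}\,dy\\
&=\;\frac{1}{2+p_e}\Big(1-\exp\!\big(-(2+p_e)\tfrac{1}{p_e}\ln\tfrac{1}{1-p_e}\big)\Big)\;=\;g(p_e),
\end{align*}
which is the claimed bound.

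The step I expect to be the crux is the pessimistic estimate with which we began: one must check carefully that an incident edge $f$ being matched before $e$ is reached forces both $Y_f<y$ and that $f$ turned out present (so the failure event really splits as $A^c\cup B^c$), and that $A$ and $B$ live on disjoint coin sets --- this is where bipartiteness, or rather simplicity of the graph, enters. This independence, together with the trivial containment $A\cap B\subseteq\{e\mbox{ is safe}\}$, is exactly what lets us \emph{multiply} the two per-endpoint survival probabilities instead of union-bounding them; a union bound is far too lossy here, yielding only $1-2y$ in place of $e^{-2y}$. The remaining ingredients --- averaging out the $Y_f$'s, the elementary inequality $\gamma_f(y)\ge e^{-p_f y}$, the single use of Jensen, and the bookkeeping that makes the final integral collapse exactly to $g(p_e)$ --- are routine.
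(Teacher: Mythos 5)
Your proof is correct and follows essentially the same route as the paper's: bound the safe-probability by the product of $(1-p_f)$ over earlier incident edges, average out the $Y_f$'s via the key estimate $1-p_f\pr{Y_f<y}\geq e^{-p_f y}$, then combine Jensen with negative correlation and the LP constraints to control $\excond{\sum_f p_f\hat x_f}{e\in\hat E}\leq 2$, and finally integrate against the density of $Y_e$. The only cosmetic differences are that you make the paper's ``in the worst case'' first step explicit via a deferred-decisions coupling (the split into $A$ and $B$ with $A\perp B$ is actually superfluous --- the product already follows from mutual independence of all the presence coins over $\hat\delta(e)$), and you apply Jensen pointwise to $t\mapsto e^{-yt}$ for each fixed $y$ rather than to the integrated function, which slightly simplifies the convexity check but yields an identical bound.
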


\begin{proof} In the worst case every edge $f\in\hat{\delta}(e)$
that is before $e$ in the ordering can be probed, and each of these
probes has to fail for $e$ to be safe. Thus 
\[
\prcond{e\mbox{ is safe}}{e\in\E}\geq\excondls{\E\setminus e,Y}{\prod_{f\in\hat{\delta}(e):Y_{f}<Y_{e}}\br{1-p_{f}}}{e\in\E}.
\]
Now we take expectation on $Y$ only, and using the fact that the
variables $Y_{f}$ are independent, we can write the latter expectation
as 
\begin{align}
 & \excondls{\E\setminus e}{\int_{0}^{\frac{1}{p_{e}}\ln\frac{1}{1-p_{e}}}\br{\prod_{f\in\hat{\delta}(e)}\br{\pr{Y_{f}\leq y}(1-p_{f})+\pr{Y_{f}>y}}}e^{-p_{e}\cdot y}\mbox{d}y}{e\in\E}.\label{eq:conditional}
\end{align}
Observe that $\pr{Y_{f}\leq y}\br{1-p_{f}}+\pr{Y_{f}>y}=1-p_{f}\pr{Y_{f}\leq y}.$
When $y>\frac{1}{p_{f}}\ln\frac{1}{1-p_{f}}$, then $\pr{Y_{f}\leq y}=1$,
and moreover, $\frac{1}{p_{f}}(1-e^{-p_{f}\cdot y})$ is an increasing
function of $y$. Thus we can upper-bound $\pr{Y_{f}\leq y}$ by $\frac{1}{p_{f}}(1-e^{-p_{f}\cdot y})$
for any $y\in\brq{0,\infty}$, and obtain that $1-p_{f}\pr{Y_{f}\leq y}\geq1-p_{f}\frac{1}{p_{f}}(1-e^{-p_{f}\cdot y})=e^{-p_{f}\cdot y}.$
Thus~\eqref{eq:conditional} can be lower bounded by 
\begin{align*}
 & \excondls{\E\setminus e}{\int_{0}^{\frac{1}{p_{e}}\ln\frac{1}{1-p_{e}}}e^{-\sum_{f\in\hat{\delta}\br e}p_{f}\cdot y-p_{e}\cdot y}\mbox{d}y}{e\in\E}\\
= & \excondls{\E\setminus e}{\frac{1}{\sum_{f\in\hat{\delta}\br e}p_{f}+p_{e}}\br{1-e^{-\br{\sum_{f\in\hat{\delta}\br e}p_{f}+p_{e}}\frac{1}{p_{e}}\ln\frac{1}{1-p_{e}}}}}{e\in\E}.
\end{align*}

\end{proof}

From the negative correlation and marginal distribution properties
we know that $\excondls{\E\setminus e}{\h x_{f}}{e\in\E}\leq\exls{\h E\setminus e}{\h x_{f}}=x_{f}$
for every $f\in\delta\br e$, and therefore $\excondls{\E\setminus e}{\sum_{f\in\hat{\delta}\br e}p_{f}}{e\in\h E}\leq\sum_{f\in\delta\br e}p_{f}x_{f}\leq2$,
where the last inequality follows from the LP constraints. Consider
function $f(x):=\frac{1}{x+p_{e}}\br{1-e^{-\br{x+p_{e}}\frac{1}{p_{e}}\ln\frac{1}{1-p_{e}}}}$.
This function is decreasing and convex. From Jensen's inequality we
know that $\ex{f(\fab{x})}\geq f(\ex{\fab{x}})$. Thus 
\begin{multline*}
\excondls{\E\setminus e}{f\br{\sum_{f\in\hat{\delta}\br e}p_{f}}}{e\in\h E}\geq f\br{\excondls{\E\setminus e}{\sum_{f\in\hat{\delta}\br e}p_{f}}{e\in\h E}}\\
\hfill\geq f(2)=\frac{1}{2+p_{e}}\br{1-e^{-\br{2+p_{e}}\frac{1}{p_{e}}\ln\frac{1}{1-p_{e}}}}=g(p_{e}).\hfill\square
\end{multline*}

From Lemma \ref{lem:safe-bip} and the marginal distribution property,
the expected contribution of edge $e$ to the profit of the solution
is 
\[
w_{e}p_{e}\cdot\pr{e\in\hat{E}}\cdot\prcond{e\text{ is safe}}{e\in\hat{E}}\geq w_{e}p_{e}x_{e}\cdot g(p_{e})\geq w_{e}p_{e}x_{e}\cdot g(1)=\frac{1}{3}w_{e}p_{e}x_{e}.
\]
Therefore, our analysis implies a $3$ approximation, matching the
result in \cite{BGLMNR12algo}. However, by playing with the probabilities
appropriately we can do better.

\paragraph{Patching with Greedy.} We next describe an improved approximation algorithm, based on the
patching of the above algorithm with a simple greedy one. Let $\delta\in(0,1)$
be a parameter to be fixed later. We define $\lrg E$ as the (\emph{large})
edges with $p_{e}\geq\delta$, and let $E_{small}$ be the remaining
(\emph{small}) edges. Recall that $LP_{bip}$ denotes the optimal
value of LP-BIP. Let also $\lrg{LP}$ and $\sml{LP}$ be the fraction
of $LP_{bip}$ due to large and small edges, respectively; i.e., $\lrg{LP}=\sum_{e\in\lrg E}w_{e}p_{e}x_{e}$
and $\sml{LP}=LP_{bip}-\lrg{LP}$. Define $\gamma\in[0,1]$ such that
$\gamma LP_{bip}=LP_{large}$. By refining the above analysis, we
obtain the following result. \begin{lemma} \label{lem:bipRefined}
Algorithm \ref{alg:bipartite} has expected approximation ratio $\frac{1}{3}\gamma+g(\delta)\br{1-\gamma}$.
\end{lemma}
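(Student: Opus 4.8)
The plan is to account for the contribution of large and small edges separately, using the refined bound from Lemma~\ref{lem:safe-bip}. Recall that the expected profit contributed by any edge $e$ is at least $w_e p_e x_e \cdot g(p_e)$. For small edges (those with $p_e < \delta$), since $g(\cdot)$ is decreasing, we have $g(p_e) \geq g(\delta)$, so their total contribution is at least $g(\delta) \cdot \sml{LP}$. For large edges we fall back on the universal bound $g(p_e) \geq g(1) = \tfrac13$, so their contribution is at least $\tfrac13 \lrg{LP}$.

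Adding these two bounds, the expected profit of the matching produced by Algorithm~\ref{alg:bipartite} is at least
\[
\tfrac13 \lrg{LP} + g(\delta)\, \sml{LP} = \tfrac13 \gamma\, LP_{bip} + g(\delta)(1-\gamma)\, LP_{bip} = \br{\tfrac13 \gamma + g(\delta)(1-\gamma)} LP_{bip},
\]
using $\lrg{LP} = \gamma LP_{bip}$ and $\sml{LP} = (1-\gamma) LP_{bip}$. Since $LP_{bip} \geq \ex{OPT}$ by Lemma~\ref{lem:Bansal}, the approximation ratio is at most the reciprocal of $\tfrac13 \gamma + g(\delta)(1-\gamma)$, which is what the lemma claims.

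There is essentially no obstacle here: the lemma is a direct bookkeeping consequence of Lemma~\ref{lem:safe-bip} together with monotonicity of $g$. The only point to be a little careful about is that the per-edge guarantee $w_e p_e x_e \cdot g(p_e)$ must be summed over the two classes using the correct branch of the bound in each case — the sharper value $g(\delta)$ for small edges and the crude value $g(1)=\tfrac13$ for large edges — and then re-expressed via $\gamma$. The real work lies downstream, in combining this algorithm with the greedy one and optimizing over $\delta$ to obtain the $2.845$ factor of Theorem~\ref{thr:mainOffline}; the present lemma is just the analysis of one of the two ingredients.
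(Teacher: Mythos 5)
Your proof is correct and follows exactly the same route as the paper: split the sum $\sum_{e} w_e p_e x_e\, g(p_e)$ over $E_{large}$ and $E_{small}$, apply $g(p_e)\ge g(1)=\tfrac13$ to large edges and $g(p_e)\ge g(\delta)$ to small edges (using monotonicity of $g$), and rewrite via $\gamma$. The only cosmetic quibble is your last sentence about "the reciprocal" — the lemma (like the paper) states the ratio as the fraction of $LP_{bip}$ achieved, not its reciprocal — but the substantive argument is identical.
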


\begin{proof} The expected profit of the algorithm is at least: 
\begin{multline*}
\sum_{e\in E}w_{e}p_{e}x_{e}\cdot g(p_{e})\geq\sum_{e\in E_{large}}w_{e}p_{e}x_{e}\cdot g(1)+\sum_{e\in E_{small}}w_{e}p_{e}x_{e}\cdot g(\delta)\\
=\frac{1}{3}LP_{large}+g(\delta)LP_{small}=\left(\frac{1}{3}\gamma+g(\delta)\br{1-\gamma}\right)LP_{bip}.\hfill\square
\end{multline*}
\end{proof}

Consider the following greedy algorithm. Compute a maximum weight
matching $M_{grd}$ in $G$ with respect to edge weights $w_{e}p_{e}$,
and probe the edges of $M_{grd}$ in any order. Note that the timeout
constraints are satisfied since we probe at most one edge incident
to each node (and timeouts are strictly positive by definition and
w.l.o.g.). \begin{lemma} \label{lem:greedy} The greedy algorithm
has expected approximation ratio $\delta\gamma$. \end{lemma}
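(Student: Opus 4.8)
The plan is to lower-bound the expected profit of the greedy matching $M_{grd}$ directly in terms of $LP_{large}$, which by definition equals $\gamma LP_{bip}$. First I would observe that once an edge $e\in M_{grd}$ is probed, it is present (and contributes $w_e p_e$) with probability exactly $p_e$, and no edge conflicts with it since $M_{grd}$ is a matching probed without any timeout or adjacency obstruction. Hence the expected profit of greedy is exactly $\sum_{e\in M_{grd}} w_e p_e$, i.e., the weight of a maximum-weight matching in $G$ under edge weights $w_e p_e$.

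The key step is to exhibit a feasible fractional matching whose $(w_e p_e)$-value is at least $LP_{large}$, so that the maximum-weight (integral) matching — and hence greedy — does at least this well. I would take the LP-BIP optimal solution $x$ restricted to large edges, i.e., the vector $(x_e)_{e\in E_{large}}$, and scale it by $\delta$. For each large edge $p_e\ge\delta$, so constraint \eqref{cardinalityConstraints} gives $\sum_{e\in\delta(u)\cap E_{large}} x_e \le \frac{1}{\delta}\sum_{e\in\delta(u)} p_e x_e \le \frac{1}{\delta}$; therefore $\delta\cdot(x_e)_{e\in E_{large}}$ satisfies the fractional matching constraint $\sum_{e\in\delta(u)} y_e\le 1$ at every node (and $0\le y_e\le 1$ trivially). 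Since $G$ is bipartite, the fractional matching polytope is integral, so there is an integral matching $M$ with $\sum_{e\in M} w_e p_e \ge \delta\sum_{e\in E_{large}} w_e p_e x_e = \delta\, LP_{large} = \delta\gamma\, LP_{bip}$. The maximum-weight matching $M_{grd}$ does at least as well, so the expected profit of greedy is at least $\delta\gamma\, LP_{bip}$, which together with Lemma~\ref{lem:Bansal} gives the claimed approximation ratio $\delta\gamma$.

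The only mild subtlety is the appeal to integrality of the bipartite fractional matching polytope (equivalently, that a maximum-weight fractional matching in a bipartite graph is attained at an integral point); in the general-graph section this is exactly where the blossom inequalities will be needed, but here bipartiteness suffices. The argument is otherwise a one-line scaling plus LP duality against $\ex{OPT}$, so I expect no real obstacle.
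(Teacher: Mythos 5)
Your proof is correct and follows essentially the same route as the paper's: restrict the LP-BIP optimum to large edges, scale by $\delta$, observe that $p_e\ge\delta$ makes this a feasible fractional matching, and then invoke integrality of the bipartite fractional matching polytope to conclude that the greedy matching has weight at least $\delta\,LP_{large}=\delta\gamma\,LP_{bip}$. You spell out the feasibility check a bit more explicitly than the paper does, but the argument is the same.
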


\begin{proof} It is sufficient to show that the expected profit of
the obtained solution is at least $\delta\cdot\lrg{LP}$. Let $x=(x_{e})_{e\in E}$
be the optimal solution to LP-BIP. Consider the solution $x'=(x'_{e})_{e\in E}$
that is obtained from $x$ by setting to zero all the variables corresponding
to edges in $E_{small}$, and by multiplying all the remaining variables
by $\delta$. Since $p_{e}\geq\delta$ for all $e\in\lrg E$, $x'$
is a feasible fractional solution to the following matching LP: 
\begin{align}
\max & \sum_{e\in E}w_{e}p_{e}z_{e} & \text{(LP-MATCH)}\nonumber \\
\mbox{s.t.} & \sum_{e\in\delta(u)}z_{e}\leq1, & \forall u\in V;\nonumber \\
 & 0\leq z_{e}\leq1, & \forall e\in E.
\end{align}
The value of $x'$ in the above LP is $\delta\cdot LP_{large}$ by
construction. Let $LP_{match}$ be the optimal profit of LP-MATCH.
Then $LP_{match}\geq\delta\cdot LP_{large}$. Given that the graph
is bipartite, LP-MATCH defines the matching polyhedron, and we can
find an integral optimal solution to it. But such a solution is exactly
a maximum weight matching according to weights $w_{e}p_{e}$, i.e. $\sum_{e\in M_{grd}}w_{e}p_{e}=LP_{match}$. The claim follows since the expected profit of the greedy algorithm is precisely the weight of $M_{grd}$. $\hfill\square$ 
\end{proof}

The overall algorithm, for a given $\delta$, simply computes the
value of $\gamma$, and runs the greedy algorithm if $\gamma\delta\geq\left(\frac{1}{3}\gamma+g(\delta)\br{1-\gamma}\right)$,
and Algorithm \ref{alg:bipartite} otherwise\footnote{Note that we cannot run both algorithms, and take the best solution.}.

The approximation factor is given by $\max\{\frac{\gamma}{3}+(1-\gamma)g(\delta),\gamma\delta\}$,
and the worst case is achieved when the two quantities are equal,
i.e., for $\gamma=\frac{g\br{\delta}}{\delta+g\br{\delta}-\frac{1}{3}}$,
yielding an approximation ratio of $\frac{\delta\cdot g\br{\delta}}{\delta+g\br{\delta}-\frac{1}{3}}$.
Maximizing (numerically) the latter function in $\delta$ gives $\delta=0.6022$,
and the final $2.845$-approximation ratio claimed in Theorem~\ref{thr:mainOffline}.

\subsection{General graphs}

\label{sec:arbitrarygraphs}

For general graphs, we consider the linear program LP-GEN which is
obtained from LP-BIP by adding the following \emph{blossom inequalities}:
\begin{align}
 & \sum_{e\in E(W)}p_{e}x_{e}\leq\frac{|W|-1}{2} & \forall W\subseteq V,|W|\text{ odd}.\label{blossomConstraints}
\end{align}
Here $E(W)$ is the subset of edges with both endpoints in $W$. We remark that, using standard tools from matching theory, we can solve LP-GEN in polynomial time despite its exponential number of constraints; see the book of Schrijver for details~\cite{Schrijver:book}. Also in this case $x_{e}=\pr{OPT\mbox{ probes }e}$ is a feasible
solution of LP-GEN, hence the analogue of Lemma \ref{lem:Bansal} still holds.

Our Stochastic Matching algorithm for the case of a general graph $G=\br{V,E}$
works via a reduction to the bipartite case. First we solve LP-GEN;
let $x=\br{x_{e}}_{e\in E}$ be the optimal fractional solution. Second we randomly split
the nodes $V$ into two sets $A$ and $B$, with $E_{AB}$ being
the set of edges between them. On the bipartite graph $\br{A\cup B,E_{AB}}$
we apply the algorithm for the bipartite case, but using the fractional
solution $\br{x_{e}}_{e\in E_{AB}}$ induced by LP-GEN rather than
solving LP-BIP. Note that $\br{x_{e}}_{e\in E_{AB}}$ is a feasible
solution to LP-BIP for the bipartite graph $\br{A\cup B,E_{AB}}$.

The analysis differs only in two points w.r.t.~the one for the bipartite
case. First, with $\h E_{AB}$ being the subset of edges of $E_{AB}$
that were rounded to 1, we have now that $\pr{e\in\h E_{AB}}=\pr{e\in E_{AB}}\cdot\prcond{e\in\h E_{AB}}{e\in E_{AB}}=\frac{1}{2}x_{e}$.
Second, but for the same reason, using again the negative correlation and marginal distribution properties, we have
\begin{align*}
 & \excondls{}{\sum_{f\in\h{\delta}\br e}p_{f}}{e\in\h E_{AB}}\leq \sum_{f\in\delta\br e}p_{f}\pr{f\in\h E_{AB}}=\frac{1}{2}\sum_{f\in\delta\br e}p_{f}x_{f}\leq\frac{1}{2}(2-2p_{e}x_{e})\leq1.
\end{align*}
Repeating the steps of the proof of Lemma~\ref{lem:safe-bip} and
including the above inequality we get the following. \begin{lemma}
\label{lem:safe-gen} For every edge $e$ it holds that $\prcond{e\mbox{ is safe}}{e\in\hat{E}_{AB}}\geq h\br{p_{e}}$,
where $$h\br{p}:=\frac{1}{1+p}\br{1-\exp\br{-\br{1+p}\frac{1}{p}\ln\frac{1}{1-p}}}.$$
\end{lemma}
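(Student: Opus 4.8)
The plan is to mimic the proof of Lemma~\ref{lem:safe-bip} verbatim, carrying through the tighter bound on the conditional expected degree that has just been established for the bipartite splitting construction. First I would start from the same worst-case observation: $e$ is safe only if every edge $f\in\hat\delta(e)$ with $Y_f<Y_e$ fails when probed, so
\[
\prcond{e\mbox{ is safe}}{e\in\hat E_{AB}}\geq\excondls{}{\prod_{f\in\hat\delta(e):Y_f<Y_e}\br{1-p_f}}{e\in\hat E_{AB}}.
\]
Taking expectation over the independent $Y$'s first, the integral representation from~\eqref{eq:conditional} applies unchanged, and the same pointwise bound $\pr{Y_f\leq y}\leq\frac{1}{p_f}(1-e^{-p_f y})$, hence $1-p_f\pr{Y_f\leq y}\geq e^{-p_f y}$, yields
\[
\prcond{e\mbox{ is safe}}{e\in\hat E_{AB}}\geq\excondls{}{\int_0^{\frac{1}{p_e}\ln\frac{1}{1-p_e}}e^{-\br{\sum_{f\in\hat\delta(e)}p_f+p_e}y}\,\mathrm dy}{e\in\hat E_{AB}}=\excondls{}{f\br{\textstyle\sum_{f\in\hat\delta(e)}p_f}}{e\in\hat E_{AB}},
\]
where now I use $f(x):=\frac{1}{x+p_e}\br{1-e^{-(x+p_e)\frac{1}{p_e}\ln\frac{1}{1-p_e}}}$, the same function as before.

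The only genuine difference is the input to Jensen's inequality. Since $f$ is decreasing and convex, $\excondls{}{f(X)}{\cdot}\geq f\br{\excondls{}{X}{\cdot}}$, and by the displayed inequality immediately preceding the statement (which combines negative correlation, the marginal $\pr{f\in\hat E_{AB}}=\frac12 x_f$, the LP cardinality constraint $\sum_{f\in\delta(e)}p_f x_f\leq 2$, and the extraction of the $p_e x_e$ term to get $\frac12(2-2p_ex_e)\leq 1$) we have $\excondls{}{\sum_{f\in\hat\delta(e)}p_f}{e\in\hat E_{AB}}\leq 1$. Monotonicity of $f$ then gives the bound
\[
\prcond{e\mbox{ is safe}}{e\in\hat E_{AB}}\geq f(1)=\frac{1}{1+p_e}\br{1-e^{-\br{1+p_e}\frac{1}{p_e}\ln\frac{1}{1-p_e}}}=h(p_e),
\]
which is exactly the claimed inequality.

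I expect no real obstacle here — the work was already done in deriving the $\leq 1$ bound on the conditional expected degree that the excerpt states just above the lemma, and everything else is a literal transcription of the Lemma~\ref{lem:safe-bip} argument with the constant $2$ replaced by $1$. The one point worth stating carefully for the reader is why the degree bound improves: the random bipartition discards every edge not in $E_{AB}$, so each surviving neighbor of $e$ is rounded to $1$ with probability $\tfrac12 x_f$ rather than $x_f$; combined with the fact that the blossom-strengthened LP still only gives $\sum_{f\in\delta(e)}p_fx_f\leq 2$ through the two cardinality constraints at the endpoints of $e$, and that the term $p_e x_e$ corresponding to $e$ itself may be subtracted off, one lands exactly at $1$. (The blossom inequalities themselves are not needed for this particular lemma; they enter later, in exploiting the large/small patching.) So the proof is a short verification, and I would present it as "repeat the proof of Lemma~\ref{lem:safe-bip} with $f(2)$ replaced by $f(1)$, using the degree bound above."
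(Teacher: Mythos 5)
Your proposal is correct and follows exactly the paper's argument: the paper explicitly says to repeat the proof of Lemma~\ref{lem:safe-bip} using the tightened conditional degree bound $\excondls{}{\sum_{f\in\hat\delta(e)}p_f}{e\in\hat E_{AB}}\leq 1$ (coming from $\pr{f\in\hat E_{AB}}=\tfrac12 x_f$ after the random bipartition), which replaces $f(2)=g(p_e)$ with $f(1)=h(p_e)$. Your side remark that the blossom inequalities are not needed for this lemma but only for the greedy/patching step is also accurate.
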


Since $h(p_{e})\geq h(1)=\frac{1}{2}$, we directly obtain a $4$-approximation
which matches the result in \cite{BGLMNR12algo}. Similarly to the
bipartite case, we can patch this result with the simple greedy algorithm
(which is exactly the same in the general graph case). For a given
parameter $\delta\in[0,1]$, let us define $\gamma$ analogously to
the bipartite case. Similarly to the proof of Lemma \ref{lem:bipRefined},
one obtains that the above algorithm has approximation factor $\frac{\gamma}{4}+\frac{1-\gamma}{2}h(\delta)$.
Similarly to the proof of Lemma \ref{lem:greedy}, the greedy algorithm
has approximation ratio $\gamma\delta$ (here we exploit the blossom
inequalities that guarantee the integrality of the matching polyhedron).
We can conclude similarly that in the worst case $\gamma=\frac{h\br{\delta}}{2\delta+h\br{\delta}-1/2}$,
yielding an approximation ratio of $\frac{\delta\cdot h\br{\delta}}{2\delta+h\br{\delta}-1/2}$.
Maximizing (numerically) this function over $\delta$ gives, for $\delta=0.5580$,
the $3.709$ approximation ratio claimed in Theorem \ref{thr:mainOfflineGeneral}.

\section{Online Stochastic Matching with Timeouts}

\label{sec:online}

Let $G=\br{A\cup B,A\times B}$ be the input graph, with items $A$
and buyer types $B$. We use the same notation for edge probabilities,
edge profits, and timeouts as in Stochastic Matching. 
Following \cite{BGLMNR12algo}, we can assume w.l.o.g. that each buyer
type is sampled uniformly with probability $1/n$. Consider the following
linear program:

\begin{align*}
\max & \sum_{a\in A,b\in B}w_{ab}p_{ab}x_{ab} & \mbox{(LP-ONL)}\\
\mbox{s.t.} & \sum_{b\in B}p_{ab}x_{ab}\leq1, & \forall a\in A\\
 & \sum_{a\in A}p_{ab}x_{ab}\leq1, & \forall b\in B\\
 & \sum_{a\in A}x_{ab}\leq t_{b}, & \forall b\in B\\
 & 0\leq x_{ab}\leq1, & \forall ab\in E.
\end{align*}
The above LP models a bipartite Stochastic Matching instance
where one side of the bipartition contains exactly one buyer per buyer
type. In contrast, in the online case several buyers of the same buyer
type (or none at all) can arrive, and the optimal strategy can allow
many buyers of the same type to probe edges. 
Still, that is not a problem since the following lemma from \cite{BGLMNR12algo}
allows us just to look at the graph of buyer types and not at the
actual realized buyers. \begin{lemma} \label{lem:onlineBansal} (\cite{BGLMNR12algo},
Lemmas 9 and 11) Let $\ex{OPT}$ be the expected profit of the optimal
online algorithm for the problem. Let $LP_{onl}$ be the optimal value
of \emph{LP-ONL. }It holds that $\ex{OPT}\leq LP_{onl}$. \end{lemma}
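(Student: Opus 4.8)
The plan is to establish $\ex{OPT} \le LP_{onl}$ by exhibiting a feasible fractional solution to LP-ONL whose objective value equals $\ex{OPT}$, namely the natural one built from the marginal probing probabilities of the optimal online strategy. Concretely, for each pair $ab$ with $a \in A$ and $b \in B$, I would define
\[
x_{ab} := \frac{1}{n}\cdot \sum_{t=1}^{n}\pr{OPT\text{ probes edge }a\tilde b\text{ at step }t \mid \text{a buyer of type }b\text{ arrives at step }t},
\]
or, more simply, $x_{ab} := \ex{\text{number of times }OPT\text{ probes an }a\text{-}b\text{ edge}}$ (these coincide since each type arrives in expectation once, given the $1/n$ sampling assumption). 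The point is that this is exactly the quantity referenced in the remark after Lemma~\ref{lem:Bansal} for the offline case, adapted to the online setting, and the whole argument is parallel to the standard one; since the excerpt explicitly attributes this to Bansal et al.\ (their Lemmas 9 and 11), I would present it as a short self-contained recap rather than reprove everything from scratch.

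First I would verify the objective: by linearity of expectation, the expected profit of $OPT$ is $\sum_{a,b} w_{ab}p_{ab}\cdot(\text{expected number of present }a\text{-}b\text{ edges }OPT\text{ adds})$, and since an edge is added with probability $p_{ab}$ each time it is probed and a matched endpoint forbids further probing, this equals $\sum_{a,b} w_{ab}p_{ab}x_{ab}$. Next I would check the constraints one by one. The item constraint $\sum_{b} p_{ab}x_{ab}\le 1$ follows because $\sum_b p_{ab}x_{ab}$ is the expected number of edges incident to item $a$ that $OPT$ adds to the matching, and $a$ can be matched at most once; since item timeouts are unbounded this is the only constraint on the $A$-side. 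The buyer constraint $\sum_{a} p_{ab}x_{ab}\le 1$ is the analogous statement on the $b$-side: summed over all arriving buyers of type $b$, the expected number of matched edges is at most the expected number of arriving buyers of type $b$, which is $1$ by the uniform-$1/n$ assumption (here one must be slightly careful: it is $\le$ the expected count of type-$b$ arrivals, and that expectation is exactly $1$). The timeout constraint $\sum_a x_{ab}\le t_b$ holds because each arriving buyer of type $b$ probes at most $t_b$ edges, and again summing over the (in expectation, one) type-$b$ arrivals gives the bound; the box constraints $0\le x_{ab}\le 1$ are immediate from the same per-arrival reasoning together with the fact that a given buyer probes a given edge at most once.

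The main subtlety — and the step I would be most careful about — is the passage from ``realized buyers'' to ``buyer types'': the optimal online strategy operates on the random sequence of realized buyers, possibly letting several buyers of the same type probe edges, so the quantities above are expectations over both the arrival process and $OPT$'s internal randomness, and one needs that these aggregate expectations still satisfy the per-type LP constraints. This is precisely the content of Lemmas~9 and~11 of \cite{BGLMNR12algo}, so rather than redo that analysis I would cite it and content myself with spelling out the three expectation identities above, noting that each follows from linearity of expectation plus the structural constraints of the online process (one match per item, $t_b$ probes per arriving type-$b$ buyer, expected number of type-$b$ arrivals equal to $1$). With feasibility and the objective value in hand, $LP_{onl}\ge \sum_{a,b}w_{ab}p_{ab}x_{ab} = \ex{OPT}$, which is the claim. $\hfill\square$
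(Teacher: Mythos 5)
Your proposal is correct and follows the same route the paper takes: the paper states the lemma with no proof of its own, simply citing Lemmas 9 and 11 of Bansal et al.\ \cite{BGLMNR12algo}, and your sketch reconstructs exactly the standard argument those lemmas formalize (take marginal probing expectations as the fractional solution, verify the objective via Wald-type linearity, and check each constraint, deferring to \cite{BGLMNR12algo} for the realized-buyers-vs.-buyer-types subtlety). The only difference is that you spell out the plausibility argument rather than relying purely on the citation, which is fine.
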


\global\long\def\Ab{A_{b}}

We will devise an algorithm whose expected outcome is at least $\frac{1}{4.07}\cdot LP_{onl}$,
and then Theorem \ref{thr:mainOnline} follows from Lemma \ref{lem:onlineBansal}.

\paragraph{The algorithm.}

We initially solve LP-ONL and let $\br{x_{ab}}_{ab\in A\times B}$
be the optimal fractional solution. Then buyers arrive. When a buyer
of type $b$ is sampled, then 1) if a buyer of the same type $b$
was already sampled before we simply discard her, do nothing, and
wait for another buyer to arrive, 2) if it is the first buyer of type
$b$, then we execute the following \emph{subroutine for buyers.}
Since we take action only when the first buyer of type $b$ comes,
we shall denote such a buyer simply by $b$, as it will not cause
any confusion.

\paragraph{Subroutine for buyers.}

Let us consider the step of the online algorithm in which the first
buyer of type $b$ arrived, if any. Let $\Ab$ be the items that are still
available when $b$ arrives. Our subroutine will probe a subset of
at most $t_{b}$ edges $ab$, $a\in\Ab$. Consider the vector $\br{x_{ab}}_{a\in\Ab}$.
Observe that it satisfies the constraints $\sum_{a\in\Ab}p_{ab}x_{ab}\leq1$
and $\sum_{a\in\Ab}x_{ab}\leq t_{b}$. Again using GKPS, we round
this vector in order to get $\br{\hat{x}_{ab}}_{a\in\Ab}$ with $\hat{x}_{ab}\in\{0,1\}$,
and satisfying the marginal distribution, degree preservation, and
negative correlation properties\footnote{Actually in this case we have a bipartite graph where one side has
only one vertex, and here GKPS reduces to Srinivasan's
rounding procedure for level-sets \cite{DBLP:conf/focs/Srinivasan01}.}. 
Let $\hat{A}_{b}$ be the set of items $a$ such that $\hat{x}_{ab}=1$.
For each $ab$, $a\in\h A_{b}$, we independently draw a random variable
$Y_{ab}$ with distribution: $\pr{Y_{ab}<y}=\frac{1}{p_{ab}}\br{1-\exp\br{-p_{ab}\cdot y}}$
for $y\in\brq{0,\frac{1}{p_{ab}}\ln\frac{1}{1-p_{ab}}}$. Let $Y=\br{Y_{ab}}_{a\in\h A_{b}}$.

Next we consider items of $\hat{A}_{b}$ in increasing order of $Y_{ab}$.
Let $\alpha_{ab}\in[\frac{1}{2},1]$ be a \emph{dumping factor} that
we will define later. With probability $\alpha_{ab}$ we probe edge
$ab$ and as usual we stop the process (of probing edges incident
to $b$) if $ab$ is present. Otherwise (with probability $1-\alpha_{ab}$)
we \emph{simulate} the probe of $ab$, meaning that with probability
$p_{ab}$ we stop the process anyway --- like if edge $ab$ were probed
and turned out to be present. Note that we do not get any profit from
the latter simulation since we do not really probe $ab$.

\paragraph{Dumping factors.}

It remains to define the dumping factors. For a given edge $ab$,
let 
\[
\beta_{ab}:=\excondls{\h A{}_{b}\setminus a,Y}{\prod_{a'\in A_{b}:Y_{a'b}<Y_{ab}}\br{1-p_{a'b}}}{a\in\h A_{b}}.
\]
Using the inequality $\sum_{a\in\Ab}p_{ab}x_{ab}\leq1$, by repeating the analysis from Section~\ref{sec:offline} we can
show that 
\[
\beta_{ab}\geq h(p_{ab})=\frac{1}{1+p_{ab}}\br{1-\exp\br{-\br{1+p_{ab}}\frac{1}{p_{ab}}\ln\frac{1}{1-p_{ab}}}}\geq\frac{1}{2}.
\]
Let us assume for the sake of simplicity that we are able to compute
$\beta_{ab}$ exactly. We will show in Appendix \ref{sec:computeDumping} how to remove
this assumption. We set $\alpha_{ab}=\frac{1}{2\beta_{ab}}$. Note
that $\alpha_{ab}$ is well defined since $\beta_{ab}\in[1/2,1]$.

\paragraph{Analysis.}

Let us denote by ${\cal A}_{b}$ the event that at least one buyer
of type $b$ arrives. The probability that an edge $ab$ is probed
can be expressed as: 
\[
\pr{{\cal A}_{b}}\cdot\prcond{\mbox{no }b'\mbox{ takes }a\mbox{ before }b}{{\cal A}_{b}}\cdot\prcond{b\mbox{ probes }a}{{\cal A}_{b}\wedge a\mbox{ is not yet taken}}.
\]
The probability that $b$ arrives is $\pr{{\cal A}_{b}}=1-\br{1-\frac{1}{n}}^{n}\geq1-\frac{1}{e}$.
We shall show first that $$\prcond{b\mbox{ probes }a}{{\cal A}_{b}\wedge a\mbox{ is not yet taken}}$$
is exactly $\frac{1}{2}x_{ab}$, and later we shall show that $\prcond{\mbox{no }b'\mbox{ takes }a\mbox{ before }b}{{\cal A}_{b}}$
is at least $\frac{1}{1+\frac{1}{2}\br{1-\frac{1}{e}}}$. This will
yield that the probability that $ab$ is probed is at least 
\[
\br{1-\frac{1}{e}}\frac{1}{1+\frac{1}{2}\br{1-\frac{1}{e}}}\cdot\frac{1}{2}x_{ab}=\frac{e-1}{3e-1}x_{ab}>\frac{1}{4.16}x_{ab}.
\]

Consider the probability that some edge $a'b$ appearing before $ab$
in the random order \emph{blocks} edge $ab$, meaning that $ab$ is
not probed because of $a'b$. Observe that each such $a'b$ is indeed
considered for probing in the online model, and the probability that
$a'b$ blocks $ab$ is therefore $\alpha_{a'b}p_{a'b}+(1-\alpha_{a'b})p_{a'b}=p_{a'b}$.
We can conclude that the probability that $ab$ is not blocked is
exactly $\beta_{ab}$.

Due to the dumping factor $\alpha_{ab}$, the probability that we
actually probe edge $ab\in\hat{A}_{b}$ is exactly $\alpha_{ab}\cdot\beta_{ab}=\frac{1}{2}$.
Recall that $\pr{a\in\hat{A}_{b}}=x_{ab}$ by the marginal distribution
property. Altogether 
\begin{equation}
\prcond{b\mbox{ probes }a}{{\cal A}_{b}\wedge a\mbox{ is not yet taken}}=\frac{1}{2}x_{ab}.\label{eq:afterdumping}
\end{equation}

Next let us condition on the event that buyer $b$ arrived, and let us lower bound the probability that $ab$ is not blocked on the $a$'s side in
such a step, i.e., that no other buyer has taken $a$ already. The
buyers, who are first occurrences of their type, arrive uniformly
at random. Therefore, we can analyze the process of their arrivals
as if it was constructed by the following procedure: every buyer $b'$
is given an independent random variable $Y_{b'}$ distributed exponentially
on $[0,\infty]$, i.e., $\pr{Y_{b'}<y}=1-e^{y}$; buyers arrive in
increasing order of their variables $Y_{b'}$. Once buyer $b'$ arrives,
it probes edge $ab'$ with probability (exactly) $\alpha_{ab'}\beta_{ab'}x_{ab'}=\frac{1}{2}x_{ab'}$
--- these probabilities are independent among different buyers. Thus,
conditioning on the fact that $b$ arrives, we obtain the following
expression for the probability that $a$ is safe at the moment when
$b$ arrives: 
\begin{eqnarray*}
 &  & \prcond{\mbox{no }b'\mbox{ takes }a\mbox{ before }b}{{\cal A}_{b}}\\
 & \geq & \excondls{}{\prod_{b'\in B\setminus b:Y_{b'}<Y_{b}}\br{1-\prcond{{\cal A}_{b'}}{{\cal A}_{b}}\prcond{b'\mbox{ probes }ab'}{{\cal A}_{b'}}p_{ab'}}}{{\cal A}_{b}}\\
 & = & \int_{0}^{\infty}\prod_{b'\in B\setminus b}\br{1-\prcond{{\cal A}_{b'}}{{\cal A}_{b}}\cdot\prcond{Y_{b'}<y}{{\cal A}_{b'}}\cdot\prcond{b'\mbox{ probes }ab'}{{\cal A}_{b'}}p_{ab'}}e^{-y}\mbox{d}y.
\end{eqnarray*}
Now let us upper-bound each of the probability factors in the above
product. First of all $\prcond{{\cal A}_{b'}}{{\cal A}_{b}}=1-\br{1-\frac{1}{n}}^{n-1}\leq1-\frac{1}{e}$.
Second, $\prcond{Y_{b'}<y}{{\cal A}_{b'}}=1-e^{-y}$ just by definition\footnote{The ${\cal A}_{b'}$ event in the condition simply indicates that
$Y_{b'}$ was drawn.}. Third, from~\eqref{eq:afterdumping} we have that $\prcond{b'\mbox{ probes }ab'}{{\cal A}_{b'}}=\frac{x_{ab}}{2}.$

Thus the above integral can be lower bounded by 
\begin{eqnarray*}
 &  & \int_{0}^{\infty}\prod_{b'\in B\setminus b}\br{1-\br{1-\frac{1}{e}}\br{1-e^{-y}}\cdot \frac{1}{2}x_{ab'}\cdot p_{ab'}}e^{-y}\mbox{d}y\\
 & \geq & \int_{0}^{\infty}\prod_{b'\in B\setminus b}\exp\br{-\br{1-\frac{1}{e}}\frac{1}{2}x_{ab'}\cdot p_{ab'}\cdot y}e^{-y}\mbox{d}y\\
 & = & \frac{1}{1+\br{1-\frac{1}{e}}\frac{1}{2}\br{\sum_{b'\in B\setminus b}p_{ab'}\cdot x_{ab'}}}\\
 & \geq & \frac{1}{1+\frac{1}{2}\br{1-\frac{1}{e}}}=\frac{2e}{3e-1}.
\end{eqnarray*}
Above in the first inequality we used the fact that $1-c(1-e^{-y})\geq e^{-cy}$
for $c\in[0,1]$ and any $y\in\mathbb{R}$: here $c=\br{1-\frac{1}{e}}\frac{1}{2}x_{ab'}\cdot p_{ab'}$.
In the first equality we used $\int_{0}^{\infty}e^{-ax}\mbox{d}x=\frac{1}{a}$.
In the last inequality we used the LP constraint $\sum_{b'\in B\setminus b}p_{ab'}\cdot x_{ab'}\leq1$.

Altogether, as anticipated earlier, 
\[
\pr{ab\mbox{ is probed}}\geq\br{1-\frac{1}{e}}\frac{x_{ab}}{2}\cdot\frac{2e}{3e-1}=x_{ab}\cdot\frac{e-1}{3e-1}>\frac{1}{4.16}\cdot x_{ab}.
\]

In Appendix~\ref{sec:computeDumping} we will show how to compute
the dumping factors so that the above probability is $\frac{e-1}{3e-1}+\eps$
for an arbitrarily small constant $\eps>0$. In particular, by choosing
a small enough $\eps$ the factor $4.16$ is still guaranteed.

We can again use the approach with big and small probabilities, thus reducing the approximation factor to $4.07$. The details are given in Appendix~\ref{sec:onlineBigSmall}. Theorem \ref{thr:mainOnline} follows.

 \bibliographystyle{plain}
\bibliography{StochasticMatching}

\appendix

\section{Combination with Greedy in the Online Case}

\label{sec:onlineBigSmall}


Recall that $h(p)=\frac{1}{1+p}\br{1-\exp\br{-\br{1+p}\frac{1}{p}\ln\frac{1}{1-p}}}$.
We are again applying the big/small probabilities trick, so let
$\delta\in(0,1)$ be a parameter to be fixed later. Consider back again the subroutine for buyers.
Previously we have used dumping factors $\alpha_{ab}=\frac{1}{2\beta_{ab}}$,
where --- recall --- $\beta_{ab}\geq h\br{p_{ab}}$.

This time we define $\alpha_{ab}=\frac{1}{\beta_{ab}}h\br{\delta}$
for $ab$ such that $p_{ab}\leq\delta$, and $\alpha_{ab}=\frac{1}{\beta_{ab}}\frac{1}{2}$
otherwise. We again assume here that we can calculate $\beta_{ab}$
(see Appendix~\ref{sec:computeDumping}). Define $E_{large}=\setst{ab\in E}{p_{ab}\geq\delta}$
and $E_{small}=E\setminus E_{large}$, and let $LP_{large}=\gamma\cdot LP_{onl}$.
Therefore, for edge $ab$ the probability that $ab$ is probed when
$b$ scans items is exactly $h\br{\delta}$ for $ab\in E_{small}$
and $\frac{1}{2}$ for $ab\in E_{large}.$ Now by repeating the steps
in the proof of Section \ref{sec:online}, we obtain that the probability that $ab$ is
not blocked on $a$'s side is at least 
\begin{align*}
\frac{1}{1+\br{1-\frac{1}{e}}\br{\sum_{b'\in B\setminus b}p_{ab'}\cdot\alpha_{ab'}\beta_{ab'}\cdot x_{ab'}}}\geq & \frac{1}{1+\br{1-\frac{1}{e}}h\br{\delta}\br{\sum_{b'\in B\setminus b}p_{ab'}\cdot x_{ab'}}}\\
\geq & \frac{1}{1+\br{1-\frac{1}{e}}h\br{\delta}},
\end{align*}
since $\alpha_{ab'}\cdot\beta_{ab'}=h\br{\delta}$ for small
edges and $\alpha_{ab'}\cdot\beta_{ab'}=\frac{1}{2}\leq h\br{\delta}$
for large edges. Therefore, the approximation ratio of such an algorithm
is at least 
\begin{multline*}
\br{1-\frac{1}{e}}\br{\gamma\frac{1/2}{1+h\br{\delta}\br{1-\frac{1}{e}}}+\br{1-\gamma}\frac{h(\delta)}{1+h\br{\delta}\br{1-\frac{1}{e}}}}\\
=\br{1-\frac{1}{e}}\frac{1}{1+h\br{\delta}\br{1-\frac{1}{e}}}\br{\gamma\frac{1}{2}+\br{1-\gamma}h\br{\delta}}.
\end{multline*}

An alternative algorithm simply computes a maximum weight matching
w.r.t. weights $p_{e}w_{e}$ in the graph corresponding to LP-ONL,
and upon arrival of the first copy of a buyer type $b$ probes only the edge incident to
$b$ in the matching (if any). By the same argument as in the offline
case, this matching has weight at least $\gamma\cdot\delta\cdot LP_{onl}$,
and every buyer type is sampled with probability at least $1-\frac{1}{e}$.
So the approximation ratio of the greedy algorithm is at least $\br{1-\frac{1}{e}}\gamma\delta$.


For a fixed $\delta$, depending on the value of $\gamma$ (that
we can compute offline) we can run the algorithm with best approximation
ratio according to the above analysis. Thus the overall approximation
ratio is 
\[
(1-\frac{1}{e})\max\left\{ \frac{1}{1+h\br{\delta}\br{1-\frac{1}{e}}}\br{\gamma\frac{1}{2}+\br{1-\gamma}h\br{\delta}},\gamma\cdot\delta\right\} .
\]
Optimizing over $\delta$ gives $\delta=0.525$ and a final approximation
factor strictly less than $4.07$. 

\section{Computing Dumping Factors}

\label{sec:computeDumping}

Recall that we assumed the knowledge of quantities $\beta_{ab}$,
which are needed to define the dumping factors $\alpha_{ab}$. Though
we are not able to compute the first quantities exactly in polynomial
time, we can efficiently estimate them and this is sufficient for
our goals. Let us focus on a given edge $ab$. Recall that 
\begin{multline*}
\beta_{ab}:=\excondls{\h A{}_{b}\setminus a,Y}{\prod_{a'\in A_{b}:Y_{a'b}<Y_{ab}}\br{1-p_{a'b}}}{a\in\h A_{b}}\\
\geq\frac{1}{1+p_{ab}}\br{1-\exp\br{-\br{1+p_{ab}}\frac{1}{p_{ab}}\ln\frac{1}{1-p_{ab}}}}=h\br{p_{ab}}.
\end{multline*}

Let us simulate the subroutine for buyers $N$ times without the dumping
factors ---- in a simulation we run GKPS, we sample the $Y$ variables,
but we simulate probes of edges, and we never really probe any edge.
We shall set $N$ later. Let $S^{1},S^{2},...,S^{N}$ be $0$-$1$
indicator random variables of whether $a$ was safe or not in each
simulation. Note that $\ex{S^{i}}=\beta_{ab}x_{ab}\in\brq{h\br{p_{ab}}x_{ab},x_{ab}}$.

Suppose that $x_{ab}\geq\frac{\epsilon}{n}$, where $n$ is the number
of buyers. The expression $\hat{s}_{ab}=\frac{1}{N}\sum_{i=1}^{N}S^{i}$
should be a good estimation of $\beta_{ab}\cdot x_{ab}$, i.e., $\hat{s}_{ab}\in\brq{\beta_{ab}x_{ab}\br{1-\epsilon},\beta_{ab}x_{ab}\br{1+\epsilon}}$
with probability $1-\frac{1}{n^{C}}$. \global\long\def\Z{Z}
 Set $N=\frac{6n}{\epsilon^{3}}\ln\br{2n^{2}\Z}$ for $Z=3\frac{1}{\eps}+1$.

Applying Chernoff's bound $\pr{|X-\ex X|>\eps\ex X}\leq2e^{-\frac{\epsilon^{2}}{3}\ex X}$
with $X=\sum_{i=1}^{N}S_{i}$ one obtains: 
\begin{align*}
 & \pr{\sum_{i=1}^{N}S_{i}\notin\brq{\br{1-\epsilon}\beta_{ab}x_{ab}\cdot N,\br{1+\epsilon}\beta_{ab}x_{ab}\cdot N}}\\
\leq & 2\exp\br{-\frac{\epsilon^{2}}{3}\beta_{ab}x_{ab}\cdot N}\leq2\exp\br{-\frac{\epsilon^{2}}{3}\frac{x_{ab}}{2}\cdot N}\leq2\exp\br{-\frac{\epsilon^{3}}{6n}\cdot N}=\frac{1}{n^{2}}\frac{1}{\Z}.
\end{align*}

From the union-bound, with probability at least $1-\frac{1}{\Z}$
we have that $\h s_{ab}\in\brq{\beta_{ab}x_{ab}\br{1-\epsilon},\beta_{ab}x_{ab}\br{1+\epsilon}}$
for every edge $ab$ such that $x_{ab}\geq\frac{\epsilon}{n}$.

Now let us assume this happened, i.e., we have good estimates. We
set $\alpha_{ab}=\max\{\frac{1}{2},\min\{\frac{1}{2}\frac{x_{ab}}{\h s_{ab}},1\}\}$
which belongs to $\brq{\frac{1}{2}\frac{1}{\beta_{ab}\br{1+\epsilon}},\frac{1}{2}\frac{1}{\beta_{ab}\br{1-\epsilon}}}$,
but only for edges $ab$ such that $x_{ab}\geq\frac{\epsilon}{n}$.
For edges $ab$ such that $x_{ab}<\frac{\epsilon}{n}$ we just put
$\alpha_{ab}=1$ (so we do not dump such edges actually). Two elements
of the proof were depending on the dumping factors. First, now the
probability that edge is taken is $\alpha_{ab}\beta_{ab}x_{ab}\in\brq{\frac{x_{ab}}{2\br{1+\epsilon}},\frac{x_{ab}}{2\br{1-\epsilon}}}$.
Second, recall the probability of an edge $ab$ not to be blocked:
\begin{equation}
\frac{1}{1+\br{1-\frac{1}{e}}\br{\sum_{b'\in B\setminus b}p_{ab'}\cdot\alpha_{ab'}\beta_{ab'}\cdot x_{ab'}}}.\label{eq:itemblockedwithdumping}
\end{equation}
We have that 
\begin{align*}
 & \sum_{b'\in B\setminus b}p_{ab'}\cdot\alpha_{ab'}\beta_{ab'}\cdot x_{ab'}\\
= & \sum_{b'\in B\setminus b:x_{ab'}\geq\frac{\epsilon}{n}}p_{ab'}\cdot\alpha_{ab'}\beta_{ab'}\cdot x_{ab'}+\sum_{b'\in B\setminus b:x_{ab'}<\frac{\epsilon}{n}}p_{ab'}\cdot\alpha_{ab'}\beta_{ab'}\cdot x_{ab'}\\
\leq & \sum_{b'\in B\setminus b:x_{ab'}\geq\frac{\epsilon}{n}}p_{ab'}\cdot\frac{1}{2\br{1-\epsilon}}x_{ab'}+\sum_{b'\in B\setminus b:x_{ab'}<\frac{\epsilon}{n}}x_{ab'}\\
\leq & \frac{1}{2\br{1-\epsilon}}+\epsilon=\frac{1}{2}+O\br{\epsilon}.
\end{align*}
So the probability that $a$ is not blocked is at least $\frac{1}{1+\br{1-\frac{1}{e}}\br{\frac{1}{2}+O\br{\epsilon}}}.$
The final probability that edge $ab$ is probed is at least 
\begin{align*}
\br{1-\frac{1}{e}}\frac{x_{ab}}{2\br{1+\epsilon}}\cdot\frac{1}{1+\br{1-\frac{1}{e}}\br{\frac{1}{2}+O\br{\epsilon}}} & =\frac{x_{ab}}{1+\eps}\cdot\frac{e-1}{2e+\br{e-1}\br{1+O\br{\epsilon}}}\\
 & =x_{ab}\cdot\frac{e-1}{3e-1+O\br{\epsilon}}>\frac{1}{4.16}\cdot x_{ab}.
\end{align*}
In the last inequality above we assumed $\eps$ to be small enough.

With probability at most $\frac{1}{\Z}$ we did not obtain good estimates
of the dumping factors. Still we have that $\alpha_{ab}\in\brq{\frac{1}{2},1}$,
and therefore $\alpha_{ab}\beta_{ab}\in\brq{\frac{1}{4},1}$. In this
case quantity~\eqref{eq:itemblockedwithdumping} can be just lower-bounded
by $\frac{1}{1+\br{1-\frac{1}{e}}}$, and the probability that edge
$ab$ is probed in the subroutine for buyers is at least $\frac{x_{ab}}{4}$.
Thus the probability that edge $ab$ is probed during the
algorithm is at least $\br{1-\frac{1}{e}}\frac{x_{ab}}{4}\cdot\frac{1}{1+\br{1-\frac{1}{e}}}=\frac{x_{ab}}{4}\cdot\frac{e-1}{2e-1}>\frac{1}{10.33}x_{ab}.$
The total expected outcome of the algorithm is therefore, for sufficiently
small $\eps$, at least 
\[
LP_{onl}\br{\br{1-\frac{1}{\Z}}\frac{e-1}{3e-1+O\br{\epsilon}}+\frac{1}{\Z}\frac{1}{4}\cdot\frac{e-1}{2e-1}}\overset{Z=3\frac{1}{\eps}+1}{\geq}\frac{1}{4.16}LP_{onl}.
\]


The above approach can be combined with the small/big probability
trick from Appendix \ref{sec:onlineBigSmall}. 
By choosing $\eps$ small enough the approximation ratio is $4.07$
as claimed.

\end{document}